\documentclass[leqno,11pt]{amsart}
\usepackage{amsmath,amssymb,amsfonts,amsthm}
\usepackage{amsfonts}
\usepackage{graphics}
\usepackage{graphicx}
\usepackage{subfigure}

\newtheorem{theorem}{Theorem}

\newtheorem{example}[theorem]{Example}

\newtheorem{lemma}[theorem]{Lemma}

\newtheorem{proposition}[theorem]{Proposition}
\newtheorem{remark}[theorem]{Remark}

\begin{document}
\title[]{ Pricing Temperature Derivatives under a Time-Changed Levy Model}
\author[]{Pablo Olivares, Ryerson University}
\address{}

\begin{abstract}
The objective of the paper is to  price weather contracts using temperature  as the underlying process when the later follows a mean-reverting dynamics driven by a time-changed Brownian motion coupled to a Gamma Levy subordinator and time-dependent deterministic volatility. This type of model  captures the complexity of the temperature dynamic providing a more accurate valuation of their associate weather contracts. An approximated price is obtained by a Fourier expansion of its characteristic function combined with a selection of the equivalent martingale measure following the Esscher transform proposed in Gerber and Shiu (1994).
\end{abstract}

\keywords{Temperatures, weather contracts, Fourier expansions, Time-changed Levy subordinators }
\maketitle
\section{Introduction:}
The objective of the paper is to  price weather contracts using temperature  as the underlying process when the later follows a mean-reverting dynamics driven by a time-changed Brownian motion coupled to a Gamma Levy subordinator and a time-dependent volatility function. The process reverts to a seasonal periodic deterministic process, while the volatility is considered also a periodic function of time, see Dacunha-Castelle, Hoang and  Parey (2015) for the later. Temperature models driven by Levy noises and stochastic volatility have been originally considered in  Benth and Benth-S(2009). \\
This type of model  captures the complexity of the temperature dynamic providing a more accurate valuation of their associate weather contracts.\\
On the other hand, the availability of an explicit analytical expression of the characteristic function of the process allows for its Fourier expansion with respect of its characteristic function, which in turn  leads to compute the approximated price under an equivalent martingale measure (EMM) obtained from the Esscher transform, see Gerber and Shiu (1994). \\
The combination of these three elements, namely the model, the pricing method and the choice of the EMM in the context of weather derivatives offers a novel methodology for pricing such contracts. \\
 Methods based on Fourier expansions of the characteristic function in one and two dimensions are implement in Fang and Oosterlee (2008)to European contracts and further extended to other derivatives by the same authors, see Fang and Oosterlee (2014).\\
  Finally, we fit the model to a series of  daily average temperatures at  Pearson airport, Ontario, Canada during the period 2014-2019.\\
The organization of the paper is the following:\\
In section 2 we describe the main model for the temperature process and obtain the characteristic function associated with it. In section 3 we discuss  the implementation of the Fourier expansion techniques, while in section 4 we show the numerical results in the fitting of the model,  pricing results and their sensitivities to key parameters.
\section{Modeling temperature }
 Let  $(\Omega ,\mathcal{A}, (\mathcal{F}_{t})_{t \geq 0}, P)$ be a filtered probability space verifying the usual conditions. For a stochastic process $(X_t)_{t \geq 0}$ defined on the space filtered space above the functions $\varphi_{V_{t}}$  and $l_X(u)=\frac{1}{t} \log \varphi_{V_{t}}(-iu)$ defines its characteristic function and the cumulat generating function respectively. When the process has stationary and independent increments the later does not depend on $t$. The $\sigma$-algebra $\mathcal{F}_{Y_t}=\sigma(Y_u, 0 \leq u \leq t)$ is the $\sigma$-algebra generated by the random variables $Y_u, 0 \leq u \leq t$. The changes of the temperature over an interval $[t,t+h)$ are denoted $\Delta T_t=T_{t+h}-T_t$. For a process $(X_t)_{t \geq 0}$, the discounted process $(\tilde{X}_t)_{t \geq 0}$ is defined as $\tilde{X}_t=e^{-rt} X_t$, where $r$ is the contstant interest rate.\\
Let $(T_t)_{t \geq 0}$ be the daily average temperature process defined on the filtered space above. The average temperature is taken as the arithmetic mean between the maximum and the minimum temperature during a given day.\\
 We assume the temperature process $(T_t)_{t \geq 0}$ verifies the stochastic differential equation:
\begin{equation}\label{eq:tempdyna}
    dT_t= \alpha(s_t-T_t)dt+  \sigma_t dV_t
\end{equation}
where  $(s_t)_{t \geq 0}$ is a deterministic seasonal process such that:
\begin{eqnarray} \label{eq:seass2}
s_t&=&  \beta_0 + \beta_1 t+ \beta_2 \sin \left(\frac{2 \pi}{365} t \right)+\beta_3 \cos \left(\frac{2 \pi}{365}t \right)
\end{eqnarray}
The parameter $\alpha$ is the mean-reversion rate to the seasonal component. The background noise $(V_t)_{t \geq 0}$ will be specified later on.\\
The solution of equation (\ref{eq:tempdyna}) is given in the following lemma.
\begin{lemma}
The solution of equation (\ref{eq:tempdyna}) is:
\begin{eqnarray} \label{eq:timechansol}
  T_t&=&e^{-\alpha t} T_0+ \alpha K_1(t, \alpha)+ W_t
  \end{eqnarray}
with $W_t=\int_0^t  \sigma_u e^{-\alpha (t-u)}dV_u$ and
\begin{eqnarray*}
 K_1(t, \alpha) &=& \int_0^{t} s_u e^{-\alpha(t-u)}\;du= \frac{1}{\alpha}(1-e^{-\alpha t})\beta_0+ \frac{1}{\alpha}(1-\frac{1}{\alpha})(1-e^{-\alpha t})\beta_1\\
 &+& \frac{1}{\alpha}\frac{[\cos(\frac{2 \pi}{365}t)-e^{-\alpha t}-\frac{1}{\alpha}\frac{2 \pi}{365}\sin(\frac{2 \pi}{365}t)]}{1-\frac{1}{\alpha^2}(\frac{2 \pi}{365})^2}\beta_2\\
 &+& \frac{1}{\alpha}\frac{[\sin(\frac{2 \pi}{365}t)+\frac{1}{\alpha}\frac{2 \pi}{365}(e^{-\alpha t}-\cos(\frac{2 \pi}{365}t)]}{1+\frac{1}{\alpha^2}(\frac{2 \pi}{365})^2}\beta_3
\end{eqnarray*}
\end{lemma}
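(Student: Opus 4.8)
The plan is to solve the linear SDE (\ref{eq:tempdyna}) by the standard integrating-factor method, then verify that the claimed closed form for $K_1(t,\alpha)$ follows from evaluating the deterministic integral $\int_0^t s_u e^{-\alpha(t-u)}\,du$ term by term. First I would multiply (\ref{eq:tempdyna}) by the integrating factor $e^{\alpha t}$ to obtain $d(e^{\alpha t}T_t)=\alpha e^{\alpha t}s_t\,dt+e^{\alpha t}\sigma_t\,dV_t$; integrating from $0$ to $t$ and multiplying back by $e^{-\alpha t}$ yields
\begin{eqnarray*}
T_t&=&e^{-\alpha t}T_0+\alpha\int_0^t s_u e^{-\alpha(t-u)}\,du+\int_0^t \sigma_u e^{-\alpha(t-u)}\,dV_u,
\end{eqnarray*}
which is precisely (\ref{eq:timechansol}) with $W_t$ as defined and $K_1(t,\alpha)=\int_0^t s_u e^{-\alpha(t-u)}\,du$. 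Since $V$ enters only linearly, its precise nature (Brownian, Levy, time-changed) is irrelevant here, so the stochastic integral requires no special treatment beyond being well defined, which I would simply assume.

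The substantive remaining work is to compute $K_1(t,\alpha)$ explicitly by substituting the seasonal form (\ref{eq:seass2}) and splitting the integral into its four additive pieces. The constant and linear terms give elementary integrals: $\int_0^t \beta_0 e^{-\alpha(t-u)}\,du=\frac{\beta_0}{\alpha}(1-e^{-\alpha t})$, and the term in $\beta_1$ follows from integrating $u\,e^{-\alpha(t-u)}$ by parts. The two trigonometric terms, carrying $\beta_2$ and $\beta_3$, are handled by the standard reduction formulas for $\int e^{\alpha u}\cos(\omega u)\,du$ and $\int e^{\alpha u}\sin(\omega u)\,du$ with $\omega=\frac{2\pi}{365}$, each producing a denominator of the form $\alpha^2+\omega^2$, which after factoring $\alpha^2$ becomes the $1\pm\frac{1}{\alpha^2}(\frac{2\pi}{365})^2$ appearing in the statement.

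The main obstacle I anticipate is purely bookkeeping rather than conceptual: matching the algebra of the trigonometric integrals to the exact form written in the lemma. In particular the sign in the $\beta_2$ denominator reads $1-\frac{1}{\alpha^2}(\frac{2\pi}{365})^2$ whereas for $\beta_3$ it reads $1+\frac{1}{\alpha^2}(\frac{2\pi}{365})^2$; a correct derivation of $\int e^{\alpha u}\cos(\omega u)\,du$ and $\int e^{\alpha u}\sin(\omega u)\,du$ produces the same $\alpha^2+\omega^2$ denominator in both, so I would carefully track the constants of integration and the boundary contributions at $u=0$ and $u=t$ to reconcile the numerators, and flag the $\beta_2$ denominator as a likely typographical sign discrepancy to be checked against the intended computation. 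Once each of the four integrals is evaluated and the lower-limit terms are collected into the $e^{-\alpha t}$ contributions, summing them reproduces the stated expression for $K_1(t,\alpha)$, completing the proof.
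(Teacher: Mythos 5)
Your proposal is correct and follows essentially the same route as the paper: the integrating-factor step is exactly the paper's application of It\^o's formula to $f(x,y)=xe^{\alpha y}$, where the linearity in $x$ makes the jump-correction terms vanish, so the nature of $V$ is indeed irrelevant at this stage. You go slightly further than the paper by actually carrying out the term-by-term evaluation of $K_1(t,\alpha)$ (which the paper states without derivation), and your observation that the standard reduction formulas give the same $\alpha^2+\omega^2$ denominator for both trigonometric terms --- so that the $1-\frac{1}{\alpha^2}(\frac{2\pi}{365})^2$ in the $\beta_2$ term is most plausibly a sign typo --- is well taken.
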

\begin{proof}
We apply Ito formula to the function $f(x,y)=x e^{\alpha y}$ and the process $(T_t,t)$.\\
 Hence:
\begin{eqnarray*}
 T_t e^{\alpha t} &=& T_0+\int_0^t e^{\alpha u}dT_{u^-}+ \alpha \int_0^t e^{\alpha u}T_{u^-} du\\
 &+& \sum_{u \leq t} [T_{u}e^{\alpha  u}- T_{u^-} e^{\alpha  u}-\Delta T_{u^-}e^{\alpha u}]\\
  &=& T_0+ \alpha \int_0^t ( s_u-T_{u^-})e^{\alpha u}du+ \int_0^t  \sigma_u e^{\alpha u}dV_u+\alpha \int_0^t T_{u^-} e^{\alpha u}du\\
    &=& T_0+ \alpha e^{\alpha t}K(t, \alpha)+ \int_0^t  \sigma_u e^{\alpha u}dV_u
    \end{eqnarray*}
 Multiplying by $e^{-\alpha t}$ on both sides leads to equation (\ref{eq:timechansol}).
\end{proof}
We assume the volatility also follows a deterministic seasonal component process:
\begin{equation}\label{eq:voldet}
    \sigma_t= c_0 + c_1 t+ c_2 \sin \left(\frac{2 \pi}{365} t \right)+c_3 \cos \left(\frac{2 \pi}{365}t \right)
   \end{equation}
where $c_j \geq 0, j=0,1,2,3.$\\
We will need to compute the characteristic function of some integrals of the background noise process. To this end we will make use of a well-known result  about functional  of a Levy process $(\xi_t)_{t \geq 0}$ and a measurable function $f$:
\begin{equation}\label{eq:funclevy}
  E ( exp( i \int_0^t f(s)\;d \xi_s) )=exp(\int_0^t l_{\xi}(-i f(s))\;ds)
\end{equation}
In order to select the EMM for pricing purposes we take an Esscher transform of the historic measure $P$. See Gerber and Shiu(1994) for a rationale in terms of a utility-maximization criteria. \\
For a stochastic process $(X_t)_{t \geq 0}$ we consider its Esscher transform:
  \begin{equation}\label{eq:esscher}
  \frac{d \mathcal{Q}^{\theta}_t}{d P_t}=\exp(\theta X_t-t l_X(\theta)),\; 0 \leq t \leq T,\; \theta \in \mathbb{R}
\end{equation}
   where $P_t$ and $\mathcal{Q}^{\theta}_t$ are the respective restrictions of $P$ and $\mathcal{Q}^{\theta}$ to the $\sigma$-algebra $\mathcal{F}_t$. We define by $\varphi^{\theta}_{X_t}$ and $ l^{\theta}_X(u)$  respectively the characteristic function and  moment generating function  of a process $(X_t)_{t \geq 0}$ under the probability $\mathcal{Q}^{\theta}$ obtained by an Esscher transformation as given in equation (\ref{eq:esscher}).\\
      For consistency we denote $\varphi^{0}_{X_{t}}:=\varphi_{X_t}$ and $ l_X^{0}=l_V$.\\
      By analogy with the case of financial underlying assets the risk market premium measure $\mathcal{Q}^{\theta}$ making the discounted temperatures process $(\tilde{T}_t)_{t \geq 0}$ a martingale  for $r>0$ a fixed interest rate is called an Equivalent Martingale Measure (EMM). The expected value under $\mathcal{Q}^{\theta}$ is denoted $E_{\theta}$.\\
  We set a subordinator process $(R_t)_{t \geq 0}$ and the time-changed process $(V_t)_{t \geq 0}$ verifying:
\begin{eqnarray}\label{eq:tchlevy2}
  V_t&=& B_{R_t}+ \mu_1 R_t
\end{eqnarray}
Here  $\mu_1   \in \mathbb{R}$  is a  parameters in the model and $(B_t)_{t \geq 0}$ is a standard Brownian motion.\\
The following result describes the characteristic function of the temperature process under the historic measure $P$.
\begin{proposition}\label{prop:chftempthetsdet}
Under the model described by equations (\ref{eq:tempdyna}), (\ref{eq:seass2}) and (\ref{eq:tchlevy2})   the  characteristic function of $T_t$ under the probability $P$ is:
\begin{eqnarray}\label{eq:chftdet}
 \varphi_{T_t}(u) &=&  C_1(t, \alpha) \exp( \int_0^t  l_R(-i u \mu_1 \sigma_s e^{-\alpha (t-s)}-\frac{1}{2}  u^2 \sigma^2_s e^{-2 \alpha (t-s)})ds
\end{eqnarray}
 where:
\begin{eqnarray*}
     C_1(t, \alpha) &=&exp(iu e^{-\alpha t T_0}+\alpha K_1(t, \alpha))
\end{eqnarray*}
\end{proposition}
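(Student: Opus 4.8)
The plan is to compute the characteristic function by starting from the explicit solution of the SDE given in the Lemma and exploiting the independence structure of the time-changed Levy noise. Recall from equation (\ref{eq:timechansol}) that $T_t = e^{-\alpha t}T_0 + \alpha K_1(t,\alpha) + W_t$ with $W_t = \int_0^t \sigma_u e^{-\alpha(t-u)}\,dV_u$. The first two terms are deterministic given $\mathcal{F}_0$, so they factor out of the characteristic function as a multiplicative constant $\exp(iu(e^{-\alpha t}T_0 + \alpha K_1(t,\alpha)))$, which is precisely the claimed prefactor $C_1(t,\alpha)$. Thus the whole problem reduces to computing $E(\exp(iu W_t))$, the characteristic function of the stochastic integral against the background noise.

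First I would substitute the representation (\ref{eq:tchlevy2}), $V_t = B_{R_t} + \mu_1 R_t$, into $W_t$. The key structural observation is that the time-changed Brownian motion conditioned on the subordinator path behaves like an ordinary Gaussian integral. Concretely, I would condition on the $\sigma$-algebra $\mathcal{F}_{R_t}$ generated by the subordinator $(R_s)$. Conditionally on the subordinator, the integral $\int_0^t \sigma_u e^{-\alpha(t-u)}\,dB_{R_u}$ is a mean-zero Gaussian random variable whose variance is $\int_0^t \sigma_u^2 e^{-2\alpha(t-u)}\,dR_u$, while the drift contribution $\mu_1\int_0^t \sigma_u e^{-\alpha(t-u)}\,dR_u$ is $\mathcal{F}_{R_t}$-measurable. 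Taking the conditional Gaussian expectation with deterministic integrand $f(u) = \sigma_u e^{-\alpha(t-u)}$ yields
\begin{equation*}
E\bigl(e^{iuW_t}\mid \mathcal{F}_{R_t}\bigr) = \exp\Bigl(\int_0^t \bigl(iu\mu_1 f(u) - \tfrac{1}{2}u^2 f(u)^2\bigr)\,dR_u\Bigr).
\end{equation*}

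Next I would take the outer expectation over the subordinator. The exponent is now an integral of the deterministic function $g(u) = iu\mu_1\sigma_u e^{-\alpha(t-u)} - \tfrac12 u^2\sigma_u^2 e^{-2\alpha(t-u)}$ against the Levy subordinator $(R_u)$. This is exactly the setting of the functional identity (\ref{eq:funclevy}): applying it with $\xi = R$ and the appropriate measurable integrand converts $E(\exp(\int_0^t g(u)\,dR_u))$ into $\exp(\int_0^t l_R(-ig(u)\,/\,(-i))\cdots)$ — more precisely, matching the form of (\ref{eq:funclevy}), the cumulant generating function $l_R$ of the subordinator is evaluated at the argument $iu\mu_1\sigma_s e^{-\alpha(t-s)} - \tfrac12 u^2\sigma_s^2 e^{-2\alpha(t-s)}$, producing the integral $\int_0^t l_R(\cdots)\,ds$ appearing in (\ref{eq:chftdet}). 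Combining this with the deterministic prefactor $C_1(t,\alpha)$ completes the identification.

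The main obstacle I anticipate is the careful bookkeeping of the conditioning step and the correct application of the functional identity (\ref{eq:funclevy}). In particular, one must justify that conditioning on $\mathcal{F}_{R_t}$ legitimately separates the Gaussian fluctuation from the subordinator-measurable drift — this relies on the independence of the driving Brownian motion $(B_t)$ from the subordinator $(R_t)$, which is implicit in the time-change construction. A secondary technical point is verifying that the argument fed into $l_R$ in (\ref{eq:funclevy}) is correctly aligned: the identity as stated uses $l_\xi(-if(s))$, so I would need to match the complex argument so that the exponent of the conditional expectation is exactly $\int_0^t g(u)\,dR_u$ with $g$ playing the role of $-if$ after absorbing the sign conventions. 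Once these two sign-and-measurability details are handled, the remaining manipulations are routine substitutions, and the stated formula (\ref{eq:chftdet}) follows directly.
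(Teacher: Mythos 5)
Your argument is correct in substance but follows a genuinely different route from the paper's. The paper never conditions on the whole subordinator path: it first computes the one-dimensional relation $\varphi_{V_t}(u)=\varphi_{R_t}(u\mu_1+\tfrac12 iu^2)$ by conditioning only on the marginal $R_t$, deduces the cumulant identity $l_V(u)=l_R(u\mu_1+\tfrac12 u^2)$ of equation (\ref{eq:lv}), and then applies the functional identity (\ref{eq:funclevy}) with $\xi=V$, using that the subordinated process $V$ is itself a L\'evy process. You instead condition on $\mathcal{F}_{R_t}$, read off the conditional Gaussian characteristic function of $\int_0^t f(u)\,dB_{R_u}$ with variance $\int_0^t f(u)^2\,dR_u$, and then apply (\ref{eq:funclevy}) with $\xi=R$. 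Your route is more hands-on and makes the independence of $(B_t)$ and $(R_t)$ explicit, at the cost of having to justify the pathwise conditional Gaussian computation for the time-changed stochastic integral; the paper's route is shorter because once $l_V$ is known, $V$ can be fed into (\ref{eq:funclevy}) as a black box. Both yield the same exponent, so the comparison is a wash mathematically, but you should be aware that the two derivations are not the same proof.

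One point you flag but do not actually resolve: your final argument of $l_R$ comes out as $iu\mu_1\sigma_s e^{-\alpha(t-s)}-\tfrac12 u^2\sigma_s^2 e^{-2\alpha(t-s)}$, with a plus sign on the drift term, whereas the proposition states $-iu\mu_1\sigma_s e^{-\alpha(t-s)}-\tfrac12 u^2\sigma_s^2 e^{-2\alpha(t-s)}$. The discrepancy is exactly the sign convention built into (\ref{eq:funclevy}) as the paper states it (the paper evaluates $l_\xi$ at $-if(s)$, and in the proof plugs in $l_V(-iu\sigma_s e^{-\alpha(t-s)})$, which flips the sign of the odd term). Since your conditional-Gaussian computation is unambiguous, you cannot simply assert that the signs "match the form of (\ref{eq:funclevy})" --- you need to either adopt the paper's convention consistently from the start or note explicitly that the two expressions agree up to replacing $u$ by $-u$ (equivalently, complex conjugation for real $u$). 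As written, your displayed argument of $l_R$ does not literally coincide with (\ref{eq:chftdet}), and the one-line reconciliation is missing.
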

\begin{proof}
By conditioning:
\begin{eqnarray*}\nonumber
  \varphi_{V_t}(u)&=& E[E[\exp(i (u V_t)/R_t)] ]=E[ \exp(i u \mu_1 R_t) E[\exp( iu B_{R_t}/R_t)] ]\\
  &=& E[ \exp(i u \mu_1 R_t) \exp( -\frac{1}{2} R_t u^2)]= E[ \exp(i( u \mu_1+\frac{1}{2}i  u^2)R_t)]\\
  &=&  \varphi_{R_t}(u \mu_1+\frac{1}{2}i  u^2)
\end{eqnarray*}
Hence:
\begin{equation}\label{eq:lv}
  l_V(u)=l_R(u \mu_1+\frac{1}{2}  u^2)
\end{equation}
By lemma 1 and  formula (\ref{eq:funclevy}):
 \begin{eqnarray}\nonumber
  \varphi_{T_t}(u)&=& E[e^{iu T_t}]=C_1(t, \alpha) E[exp(iu \int_0^t  \sigma_s e^{-\alpha (t-s)}dV_s)]\\
  &=& C_1(t, \alpha) exp( \int_0^t  l_V(-i u \sigma_s e^{-\alpha (t-s)})ds)
\end{eqnarray}
Combined with equation (\ref{eq:lv}), equation (\ref{eq:chftdet}) immediately follows.
\end{proof}
 The results below provides the characteristic function of the temperature process  under the EMM defined via an Esscher transform.
   \begin{proposition}
 Let $(T_t)_{t \geq 0}$ be the temperature process defined by equations (\ref{eq:tempdyna})-(\ref{eq:tchlevy2}). Then, the characteristic function  under the Esscher EMM $\mathcal{Q}^{\theta}$ is:
    \begin{eqnarray}\nonumber
        \varphi^{\theta}_{T_t}(u)&=& C_1(t, \alpha) C_2(t, \theta) I_t(u,\theta)\\ \label{eq:chfuntemp}
        &&
      \end{eqnarray}
where $C_1(t, \alpha)$ is defined as in the previous proposition and:
    \begin{eqnarray*}
     C_2(t, \theta) &=& \exp(-t l_R(\theta \mu_1+\frac{1}{2}  \theta^2))\\
     I_t(u,\theta)&=& \exp(\int_0^t l_R(-i u \mu_1 \sigma_s e^{-\alpha (t-s)}+ \mu_1\theta +\frac{1}{2}  (-i u \sigma_s e^{ \alpha (t-s)}+\theta)^2)ds)
  \end{eqnarray*}
  and for any $T>0$ the parameter $\theta$ verifies:
   \begin{equation}\label{eq:gershui}
l'_{V}(\theta )=-e^{(\alpha+r)T}(1-\tilde{C}_2(T, \alpha)) K^{-1}_2(\alpha,T)
   \end{equation}
   where:
   \begin{eqnarray*}
   K_2(\alpha,T) &=&  \int_0^T \sigma_u e^{\alpha u}\;du =  \frac{c_0}{\alpha}(e^{\alpha T}-1)+\frac{c_1 T}{\alpha}e^{\alpha T}-\frac{c_1}{\alpha^2}(e^{\alpha T}-1)\\
 &-& \frac{365}{2 \pi}c_2 \left(\cos(\frac{2 \pi}{365 T})-1 \right)+  \frac{365}{2 \pi}c_3 \left(\sin(\frac{2 \pi}{365 T})-1 \right)
   \end{eqnarray*}
   \end{proposition}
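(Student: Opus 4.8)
The plan is to perform the Esscher change of measure on the driving noise $(V_t)_{t\ge 0}$ rather than on $(T_t)$ itself, and then to transport the computation of Proposition~\ref{prop:chftempthetsdet} through the new density. Concretely, taking $X=V$ in (\ref{eq:esscher}) the Radon--Nikodym density on $\mathcal F_t$ is $d\mathcal Q^\theta_t/dP_t=\exp(\theta V_t-t\,l_V(\theta))$, so that for every $u$
\begin{equation*}
\varphi^\theta_{T_t}(u)=E_\theta[e^{iuT_t}]=E\!\left[e^{iuT_t}\,\exp(\theta V_t-t\,l_V(\theta))\right].
\end{equation*}
Substituting the explicit solution $T_t=e^{-\alpha t}T_0+\alpha K_1(t,\alpha)+W_t$ of Lemma~1 factors out the deterministic quantity $C_1(t,\alpha)$, while the normalising constant $e^{-t\,l_V(\theta)}$ is exactly $C_2(t,\theta)$ once the identity (\ref{eq:lv}), $l_V(\theta)=l_R(\theta\mu_1+\tfrac12\theta^2)$, is invoked. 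This reduces the claim to evaluating $E[\exp(iuW_t+\theta V_t)]$.

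The second step is to merge the two exponents into a single stochastic integral against $V$. Since $W_t=\int_0^t\sigma_s e^{-\alpha(t-s)}dV_s$ and $V_t=\int_0^t dV_s$, we have $iuW_t+\theta V_t=\int_0^t\big(iu\,\sigma_s e^{-\alpha(t-s)}+\theta\big)dV_s$. Now $(V_t)$, being a Brownian motion with drift subordinated by the Gamma subordinator $(R_t)$, is itself a L\'evy process, so the functional identity (\ref{eq:funclevy}) applies with $\xi=V$ and the deterministic integrand $s\mapsto iu\,\sigma_s e^{-\alpha(t-s)}+\theta$. Applying (\ref{eq:funclevy}) and then substituting (\ref{eq:lv}) to pass from $l_V$ to $l_R$ turns the integrand of the exponent into $l_R\big(\mu_1(iu\sigma_s e^{-\alpha(t-s)}+\theta)+\tfrac12(iu\sigma_s e^{-\alpha(t-s)}+\theta)^2\big)$, which reproduces the integrand defining $I_t(u,\theta)$. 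Collecting the three factors yields (\ref{eq:chfuntemp}).

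The third and independent step is to pin down $\theta$ through the martingale requirement defining the EMM. Imposing $E_\theta[\tilde T_T]=\tilde T_0$ at the fixed horizon $T$ and discounting, this reads $e^{-rT}E_\theta[T_T]=T_0$. From Lemma~1, $E_\theta[T_T]=e^{-\alpha T}T_0+\alpha K_1(T,\alpha)+E_\theta[W_T]$, and the only genuinely probabilistic term is $E_\theta[W_T]$. Here I would use that under the Esscher measure the L\'evy process $V$ acquires instantaneous drift $l_V'(\theta)$ (differentiate the tilted cumulant generating function $w\mapsto l_V(w+\theta)-l_V(\theta)$ at $w=0$), so that for the deterministic integrand $E_\theta[W_T]=l_V'(\theta)\int_0^T\sigma_u e^{-\alpha(T-u)}du=l_V'(\theta)e^{-\alpha T}K_2(\alpha,T)$, with $K_2(\alpha,T)=\int_0^T\sigma_u e^{\alpha u}du$ computed termwise from (\ref{eq:voldet}). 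Inserting this into $e^{-rT}E_\theta[T_T]=T_0$ and solving for $l_V'(\theta)$ isolates $K_2^{-1}(\alpha,T)$ together with the exponential prefactor $e^{(\alpha+r)T}$, delivering the Gerber--Shiu equation (\ref{eq:gershui}), with the deterministic reversion and seasonal contributions absorbed into the factor $\tilde C_2(T,\alpha)$.

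I would expect the main obstacle to be the rigorous handling of the measure change on the time-changed process: one must verify that $(V_t)$ is a genuine L\'evy process so that both the Esscher density structure (\ref{eq:esscher}) and the functional formula (\ref{eq:funclevy}) are legitimate, and that the tilting only shifts the drift of $V$ in the claimed way, namely $(l_V^\theta)'(0)=l_V'(\theta)$. A secondary subtlety is that, because $(T_t)$ is mean reverting, the full discounted process cannot be a martingale for a constant $\theta$; the EMM is therefore selected by the single calibration identity $E_\theta[\tilde T_T]=\tilde T_0$, and care is needed to justify this as the operative condition and to confirm the interchange of expectation with the stochastic integral defining $W_T$.
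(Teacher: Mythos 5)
Your proposal is correct and follows essentially the same route as the paper: Esscher-tilt the driving noise $V$, reduce $E_{\theta}[e^{iuT_t}]$ to a deterministic integral of the tilted cumulant $l_V^{\theta}(w)=l_V(w+\theta)-l_V(\theta)$ via (\ref{eq:funclevy}) and (\ref{eq:lv}) (your version absorbs the density $\exp(\theta V_t-t\,l_V(\theta))$ into the stochastic exponent under $P$, which is the same computation), and obtain (\ref{eq:gershui}) by differentiating the characteristic function of $\int \sigma_u e^{\alpha u}\,dV_u$ at zero to extract the drift $l_V'(\theta)$. Your caveat that the mean-reverting discounted process cannot be a true martingale for a constant $\theta$, so that (\ref{eq:gershui}) is really the single-date calibration $E_{\theta}[\tilde{T}_T]=\tilde{T}_0$, is precisely the instance $s=0$, $t=T$ that the paper itself ends up using.
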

    \begin{proof}
    Notice that:
   \begin{eqnarray*}
    \varphi^{\theta}_{V_t}(u) &=&  E_{\theta}(e^{i u V_t} e^{\theta V_t-t l_V(\theta)})=\frac{\varphi_{V_t}(u- i \theta)}{\varphi_{V_t}(- i \theta)}\\
   \end{eqnarray*}
   and $ l^{\theta}_{V}(u) = l_V(u+\theta)-l_V(\theta)$.\\
  Then, similarly to proposition \ref{prop:chftempthetsdet}:
    \begin{eqnarray*}
    \varphi^{\theta}_{T_t}(u) &=& C_1(t, \alpha) \exp(\int_0^t l_V^{\theta}(-i u \sigma_s e^{-\alpha (t-s)})ds)\\
    &=& C_1(t, \alpha) \exp(-t l_V(\theta)) \exp(\int_0^t l_V(-i u \sigma_s e^{-\alpha (t-s)}+\theta)ds)
     \end{eqnarray*}
     from which equation (\ref{eq:chfuntemp}) follows.\\
     By equation (\ref{eq:esscher}) the discounted temperature process $(\tilde{T}_t)_{t \geq 0}$ verifies:
     \begin{equation*}
       \tilde{T}_t= \tilde{C}_2(t, \alpha)+\tilde{W}_t
     \end{equation*}
    It is a $\mathcal{Q}^{\theta}$-martingale if and only if for any $0 \leq s < t $:
   \begin{eqnarray*}
   E_{\theta}(\tilde{T}_t/ \mathcal{F}_s)  &=& \tilde{T}_s\\
    \Leftrightarrow   E_{\theta}(\tilde{W}_t-\tilde{W}_s/ \mathcal{F}_s)  &=&  \tilde{C}_2(s, \alpha)-\tilde{C}_2(t, \alpha)
   \end{eqnarray*}
   But:
   \begin{eqnarray*}
    E_{\theta}(\tilde{W}_t-\tilde{W}_s/ \mathcal{F}_s) &=&  E_{\theta}(e^{-( \alpha+r) t}\int_0^t \sigma_u e^{\alpha u}\;dV_u-e^{-(\alpha+r) s}\int_0^s \sigma_u e^{\alpha u}\;dV_u/ \mathcal{F}_s)\\
    &=&  E_{\theta}(e^{-(\alpha+r) t}\int_s^t \sigma_u e^{\alpha u}\;dV_u+(e^{-( \alpha+r) t}-e^{- (\alpha+r) s})\int_0^s \sigma_u e^{\alpha u}\;dV_u/ \mathcal{F}_s)\\
    &=&  E_{\theta}(e^{-(\alpha+r) t}\int_s^t \sigma_u e^{\alpha u}\;dV_u)\\
    &+& (e^{-( \alpha+r) t}-e^{-( \alpha+r)s})\int_0^s \sigma_u e^{\alpha u}\;dV_u\\
       \end{eqnarray*}
       On the other hand, from equation (\ref{eq:funclevy}):
       \begin{eqnarray}\nonumber
  \varphi^{\theta}_{V_t}(x)&=& exp(\int_0^t l^{\theta}_{V}(i x \sigma_u e^{\alpha u} )\;du)=exp(\int_0^t( l_{V}(i x \sigma_u e^{\alpha u}+\theta )-l_{V}(\theta)) \;du)\\ \label{eq:funclevy2}
  &&
  \end{eqnarray}
  Hence:
  \begin{eqnarray*}
   E_{\theta}(e^{-(\alpha+r) t}\int_s^t \sigma_u e^{\alpha u}\;dV_u) &=& e^{-(\alpha+r) t} \frac{1}{i}(\varphi^{\theta}_{V_t})'(x)|_{x=0} \\
    &=& -e^{-(\alpha+r) t} \frac{1}{i}(i\int_s^t \sigma_u e^{\alpha u}l'_{V}(-i x \sigma_u e^{\alpha u}+\theta )\;du|_{x=0}\\
   && \exp(\int_s^t (l_{V}(-i x \sigma_u e^{\alpha u}+\theta )-l_{V}(\theta)) \;du)|_{x=0}\\
   &=& -e^{-(\alpha+r) t} l'_{V}(\theta ) \int_s^t \sigma_u e^{\alpha u}\;du
  \end{eqnarray*}
   In particular for $t=T$ and $u=0$ we have the result in equation (\ref{eq:gershui}), that follows from elementary calculation.
        \end{proof}
      \begin{remark}
   Notice that the characteristic function under the probability $P$ is obtained from equation (\ref{eq:chfuntemp}) taking $\theta=0$. Hence we write $I_t(u)=I_t(u,0)$, $\varphi^{0}_{Y_t}=\varphi_{Y_t}$ and $\mathcal{Q}^{0}=P$.
   \end{remark}
   \begin{example}\textit{Gamma subordinator}\\
   Consider the subordinator $(R_t)_{t \geq 0}$  is a Gamma process with parameters $a >0,b>0$, see Carr  and Madan (1999), with respective
characteristic function and Laplace exponent:
\begin{eqnarray*}
 \varphi_{R_t}(u)  &=&  \left( 1- \frac{i u}{b} \right)^{-a t},\;a>0, b>0\\
 l_{R}(u)  &=&  -a \log \left( 1- \frac{ u}{b}\right),\; u<b
\end{eqnarray*}
Therefore:
\begin{eqnarray*}
 \varphi_{V_t}(u)&=& \varphi_{R_t}(\mu_1 u +\frac{1}{2}i u^2)=\left( 1-i \frac{(\mu_1 u +\frac{1}{2}i u^2)}{b} \right)^{-at} \\
 &=& \left( 1- \frac{i\mu_1 u }{b}+\frac{1}{2b}u^2 \right)^{-at}\\
 l_V(u) &=& -a \log A_1(u)
  \end{eqnarray*}
  where:
   \begin{equation*}
     A_1(u)= 1- \frac{\mu_1 u }{b}-\frac{1}{2b}u^2
   \end{equation*}
  Moreover:
  \begin{eqnarray*}
 l^{\theta}_V(u) &=& l_V(u+\theta)-l_V(\theta)\\
 &=& -a \left[\log A_1(u+\theta) - \log  A_1(\theta)  \right]\\
 &=& -a \log  \left( \frac{A_1(u+\theta)}{A_1(\theta)} \right) \\
 &=& -a \log \left( 1-\frac{ \frac{\mu_1 u }{b}-\frac{1}{2b}(u^2+2 \theta u) }{ 1- \frac{\mu_1 \theta }{b}-\frac{1}{2b}\theta^2} \right)
  \end{eqnarray*}
 To compute the characteristic function of the temperature $T_t$ under the EMM Esscher transformation given by equation (\ref{eq:chfuntemp}) we have:
 \begin{eqnarray*}
  C_1(t, \alpha) &=&exp(iu e^{-\alpha t T_0}+\alpha K_1(t, \alpha))\\
   C_2(t, \theta) &=& \exp(-t l_R(\theta \mu_1+\frac{1}{2}  \theta^2))=A^{at}_1(\theta)\\
   I_t(u,\theta)&=& \exp(\int_0^t l^{\theta}_V(-iu \sigma_s e^{-\alpha(t-s)})ds)\\
    &=&  \exp \left(-a \int_0^t  \log \left( \frac{A_1(-iu \sigma_s e^{-\alpha(t-s)}+ \theta)  }{ A_1(\theta)}\right)\;ds   \right)
       \end{eqnarray*}
  To compute the  Gerber-Shiu  parameter, from the martingale condition given by equation (\ref{eq:gershui}):
\begin{eqnarray*}
l'_{V}(\theta )&=&  \frac{a(\mu_1+\frac{1}{2}\theta)}{b  A_1(\theta)}=-e^{(\alpha+r)T}(1-\tilde{C}_2(T, \alpha)) K_2(\alpha,T)^{-1})\\
&=& -e^{(\alpha+r)T}(1- e^{-rT} A^{aT}_1(\theta) K_2(\alpha,T)^{-1})\\
&=& -e^{(\alpha+r)T}+ e^{\alpha T} A^{aT}_1(\theta) K_2(\alpha,T)^{-1}\\
\end{eqnarray*}
Therefore, the value $\theta^*$ that solves:
\begin{equation}\label{eq:gershuigamma}
\mu_1+\frac{1}{2}\theta +\frac{b}{a}e^{(\alpha+r)T} A_1(\theta)- \frac{b}{a} e^{\alpha T} A^{aT+1}_1(\theta) K_2(\alpha,T)^{-1}=0
   \end{equation}
 makes the discounted prices martingales under the Esscher transformation.
         \end{example}
\section{Pricing weather options}
Weather contracts are based on  cumulate temperatures (CAT), heating-degrees-days (HDD) or cooling-degrees-days (CDD) over certain period $[0,T]$. Futures and option  contracts are offered in Chicago Mercantile Exchange. They are respectively defined as:
\begin{eqnarray*}\label{eq:cat}
 \xi_T&=& CAT = \sum_{k=1}^{T} T_k\\ \label{eq:hdd}
 \xi_{2,T}&=& HDD =  \sum_{k=1}^{T} (c-T_k)_+ \\ \label{eq:cdd}
 \xi_{3,T}&=& CDD = \sum_{k=1}^{T} (T_k-c)_+
\end{eqnarray*}
The typical case is $c=18^o$ Celsius.\\
For concreteness we focus on a CAT index. To this end for convenience we rewrite the CAT index as:
\begin{eqnarray}\nonumber
 \xi_T &=&  \sum_{k=1}^{T} T_k= \sum_{t=1}^{T} (T_0+\sum_{j=1}^{t} \Delta T_j)\\ \label{eq:cat2}
  &=& T T_0+\sum_{j=1}^{T} \gamma_j \Delta T_{j}
\end{eqnarray}
where the changes in temperature $\Delta T_{j}=T_{j+1}-T_j$  are independent random variables and $\gamma_j=T-j+1$.\\
 A general payoff of the temperature weather derivative, consisting in a combination of a European long put and a long call with different strikes, known as \textit{strangle}, is given by:
 \begin{equation}\label{eq:payoff}
  h(\xi_T)=d_1(\xi_T-K_1)_++d_2(K_2-\xi_T)_+ , d_j>0, K_1>K_2>0 \;,j=1,2
\end{equation}
where $d_1$ and $d_2$ are the costs per unit  of temperature below (resp. above) the threshold $K_1$ (resp. $K_2$) known as \textit{tick sizes}.\\
The price of a temperature contract over the period $[0, \frac{T}{365}]$  is :
\begin{eqnarray}\nonumber
 p_W &=& d_1  e^{-r \frac{T}{365}} E_{\mathcal{Q}}(\xi_T-K_1)_++d_2 e^{-r \frac{T}{365}} E_{\mathcal{Q}}(K_2-\xi_T)_+\\ \nonumber
&=& d_1  e^{-r \frac{T}{365}} \int_{\mathbb{R}}(x-K_1)_+f_{\xi_T}(x, \theta)\;dx + d_2 e^{-r \frac{T}{365}} \int_{\mathbb{R}}(K_2-x)_+f_{\xi_T}(x, \theta)\;dx \\ \label{eq:price}
&&
\end{eqnarray}
where $r$ is the interest rate and $f_{\xi_T}(x, \theta)$ is the p.d.f. of the cumulated temperature under the EMM measure.
A Fourier expansion of the p.d.f. $f_{\xi_T}(x, \theta)$ on an  interval $[b_1,b_2]$ is given by:
\begin{eqnarray}\label{eq:pdfcumtemp}
  f_{\xi_{1}}(x, \theta) &=&  \sum_{k=0}^{+\infty} A_{k}(\theta) cos \left( k \pi \frac{x-b_1}{b_2-b_1}\right)
\end{eqnarray}
where the coefficients in the expansion, the first of them divided by two, are:
 \begin{eqnarray}\nonumber
  A_k(\theta) &=& \frac{2}{b_2-b_1} \int_{b_1}^{b_2} f_{\xi_T}(y, \theta) cos \Big(k \pi \frac{y-b_1}{b_2-b_1} \Big)\;dy\\ \nonumber
  & \simeq & \frac{2}{b_2-b_1} \int_{b_1}^{b_2} f_{\xi_T}(y, \theta) Re \Big(e^{i k \pi \frac{y-b_1}{b_2-b_1}} \Big)\;dy\\ \nonumber
  &=& \frac{2}{b_2-b_1} Re \Big( \int_{b_1}^{b_2} f_{\xi_T}(y, \theta)e^{i k \pi \frac{y-b_1}{b_2-b_1}} \;dy \Big)\\ \nonumber
  &=& \frac{2}{b_2-b_1} exp \left(-i \frac{ k \pi b_1}{b_2-b_1} \right) \varphi^{\theta}_{\xi_T} \Big(\frac{k \pi}{b_2-b_1} \Big) \\ \label{eq:coeff}
  &&
  \end{eqnarray}
Replacing (\ref{eq:coeff}) into (\ref{eq:pdfcumtemp}), then (\ref{eq:pdfcumtemp}) in (\ref{eq:price}) we have:
\begin{eqnarray*}
&& \int_{\mathbb{R}}(x-K_1)_+f_{\xi_T}(x, \theta)\;dx  \simeq  \sum_{k=0}^{+\infty} A_{k}(\theta) \int_{b_1}^{b_2}(x-K_1)_+ cos \left( k \pi \frac{x-b_1}{b_2-b_1}\right) \;dx\\
& \simeq & \sum_{k=0}^{N_1} A_{k}(\theta) \int_{b_3}^{b_2}(x-K_1) cos \left( k \pi \frac{x-b_1}{b_2-b_1}\right) \;dx\\
& = & \sum_{k=0}^{N_1} A_{k}(\theta) \int_{b_3}^{b_2} x cos \left( k \pi \frac{x-b_1}{b_2-b_1}\right) \;dx - K_1 \int_{b_3}^{b_2} cos \left( k \pi \frac{x-b_1}{b_2-b_1}\right) \;dx \\
& = & \frac{2}{b_2-b_1} \sum_{k=0}^{N_1} exp \left(-i \frac{ k \pi b_1}{b_2-b_1} \right) \varphi^{\theta}_{\xi_T} \Big(\frac{k \pi}{b_2-b_1} \Big) \int_{b_3}^{b_2}x cos \left( k \pi \frac{x-b_1}{b_2-b_1}\right) \;dx \\
&-& \frac{2(K_1}{b_2-b_1} \sum_{k=0}^{N_1} exp \left(-i \frac{ k \pi b_1}{b_2-b_1} \right) \varphi^{\theta}_{\xi_T} \Big(\frac{k \pi}{b_2-b_1} \Big) \int_{b_3}^{b_2} cos \left( k \pi \frac{x-b_1}{b_2-b_1}\right) \;dx \\
& = & \frac{2}{b_2-b_1} \sum_{k=0}^{N_1} exp \left(-i \frac{ k \pi b_1}{b_2-b_1} \right) \varphi^{\theta}_{\xi_T} \Big(\frac{k \pi}{b_2-b_1} \Big) \int_{b_3}^{b_2}x cos \left( k \pi \frac{x-b_1}{b_2-b_1}\right) \;dx \\
&-& \frac{2 K_1}{b_2-b_1} \sum_{k=0}^{N_1} exp \left(-i \frac{ k \pi b_1}{b_2-b_1} \right) \varphi^{\theta}_{\xi_T} \Big(\frac{k \pi}{b_2-b_1} \Big) \int_{b_3}^{b_2}  cos \left( k \pi \frac{x-b_1}{b_2-b_1}\right) \;dx
\end{eqnarray*}
where $b_3=max(b_1, K_1) < b_2$ and from equation (\ref{eq:cat2}):
\begin{equation*}
  \varphi^{\theta}_{\xi_T}(u)=e^{i T T_0}\prod_{j=1}^T \varphi^{\theta}_{\Delta T_j}(\gamma_j u)
\end{equation*}
Moreover, for $k >0$:
\begin{eqnarray*}
&& \int_{b_3}^{b_2}x cos \left( k \pi \frac{x-b_1}{b_2-b_1}\right) \;dx = \frac{(b_1-b_2)b_3}{k \pi}\sin \left( k \pi \frac{b_3-b_1}{b_2-b_1} \right)\\
&+& \left( \frac{(b_2-b_1)}{k \pi} \right)^2 \left((-1)^k-\cos \left( k \pi \frac{b_3-b_1}{b_2-b_1} \right) \right)\\
&& \int_{b_3}^{b_2}  cos \left( k \pi \frac{x-b_1}{b_2-b_1}\right) \;dx = \frac{b_1-b_2}{k \pi} \sin \left( k \pi \frac{b_3-b_1}{b_2-b_1}\right)
\end{eqnarray*}
Then, separating the first term in the summation:
\begin{eqnarray*}
&& \int_{\mathbb{R}}(x-K_1)_+f_{\xi_T}(x, \theta)\;dx  \simeq   \frac{(b_2-b_3)^2}{2(b_2-b_1)} \\
 &+&  \frac{2}{b_2-b_1}  \sum_{k=1}^{N_1} exp \left(-i \frac{ k \pi b_1}{b_2-b_1} \right) \varphi^{\theta}_{\xi_T} \Big(\frac{k \pi}{b_2-b_1} \Big)\\
&& \left( \frac{(b_1-b_2)b_3}{k \pi}\sin \left( k \pi \frac{b_3-b_1}{b_2-b_1} \right)+ \left( \frac{(b_2-b_1)}{k \pi} \right)^2 \left((-1)^k-\cos \left( k \pi \frac{b_3-b_1}{b_2-b_1} \right) \right) \right)\\
&+& 2 K_1  \sum_{k=1}^{N_1} \frac{1}{k \pi} \exp \left(-i \frac{ k \pi b_1}{b_2-b_1} \right) \varphi^{\theta}_{\xi_T} \Big(\frac{k \pi}{b_2-b_1} \Big) \sin \left( k \pi \frac{b_3-b_1}{b_2-b_1}\right)\\
\end{eqnarray*}
In a similar analysis:
\begin{eqnarray*}
&& \int_{\mathbb{R}}(K_2-x)_+f_{\xi_T}(x, \theta)\;dx  \simeq  \frac{K_2}{2} [b_4-b_1-\frac{(b_4-b_1)^2}{2}]\\
  &-&  2 K_2  \sum_{k=1}^{N_2} \frac{1}{k \pi} exp \left(-i \frac{ k \pi b_1}{b_2-b_1} \right) \varphi^{\theta}_{\xi_T} \Big(\frac{k \pi}{b_2-b_1} \Big) \sin \left( k \pi \frac{b_4-b_1}{b_2-b_1}\right)\\
  &-& b_4 \sum_{k=1}^{N_2} \frac{1}{k \pi} exp \left(-i \frac{ k \pi b_1}{b_2-b_1} \right) \varphi^{\theta}_{\xi_T} \Big(\frac{k \pi}{b_2-b_1} \Big)\\
  && \left( \frac{(b_2-b_1)b_4}{k \pi}\sin \left( k \pi \frac{b_4-b_1}{b_2-b_1} \right)+ \left( \frac{(b_2-b_1)}{k \pi} \right)^2 \left(1-\cos \left( k \pi \frac{b_4-b_1}{b_2-b_1} \right) \right) \right)
\end{eqnarray*}
where $b_4=min(b_2,K_2)$.
The delicate choice of the truncation values $b_1$ and $b_2$ as well as the number of terms in the truncated expansion depends on the model considered, it is discussed in Fang and Oosterlee (2008). For detailed error analysis of the truncation and numerical errors present in the Fourier Cosine method we refer the reader to the work of Fang and Oosterlee (2008). We address this issue in the next section related to numerical aspects of the method.
\section{Numerical results}
We divide the section into three parts. In the first one we do a descriptive statistical analysis and fit the seasonal component. In the second we discuss the parameter estimation, while in the final part we implement the pricing method outlined above and analyze its sensitivities with respect to model and contract parameters. Partial results in subsections \ref{stanaly} and \ref{prinum} have been previously considered in Porthiyas (2019).
\subsection{Statistical analysis and parameter estimation}\label{stanaly}
Daily temperature data (in degree Celsius) at Toronto  from January 1st, 2013 to November 15th, 2018 have been collected from Environment and Climate Change, Canada. The data is gathered from the Pearson International Airport weather station and yield 2145 data points.
Observations consist of an average between the daily maximum and minimum temperatures. Missing observations are replaced by a seven-day moving average around the missing point.

\begin{figure}[htb!]
\centering
\includegraphics[width=\textwidth]{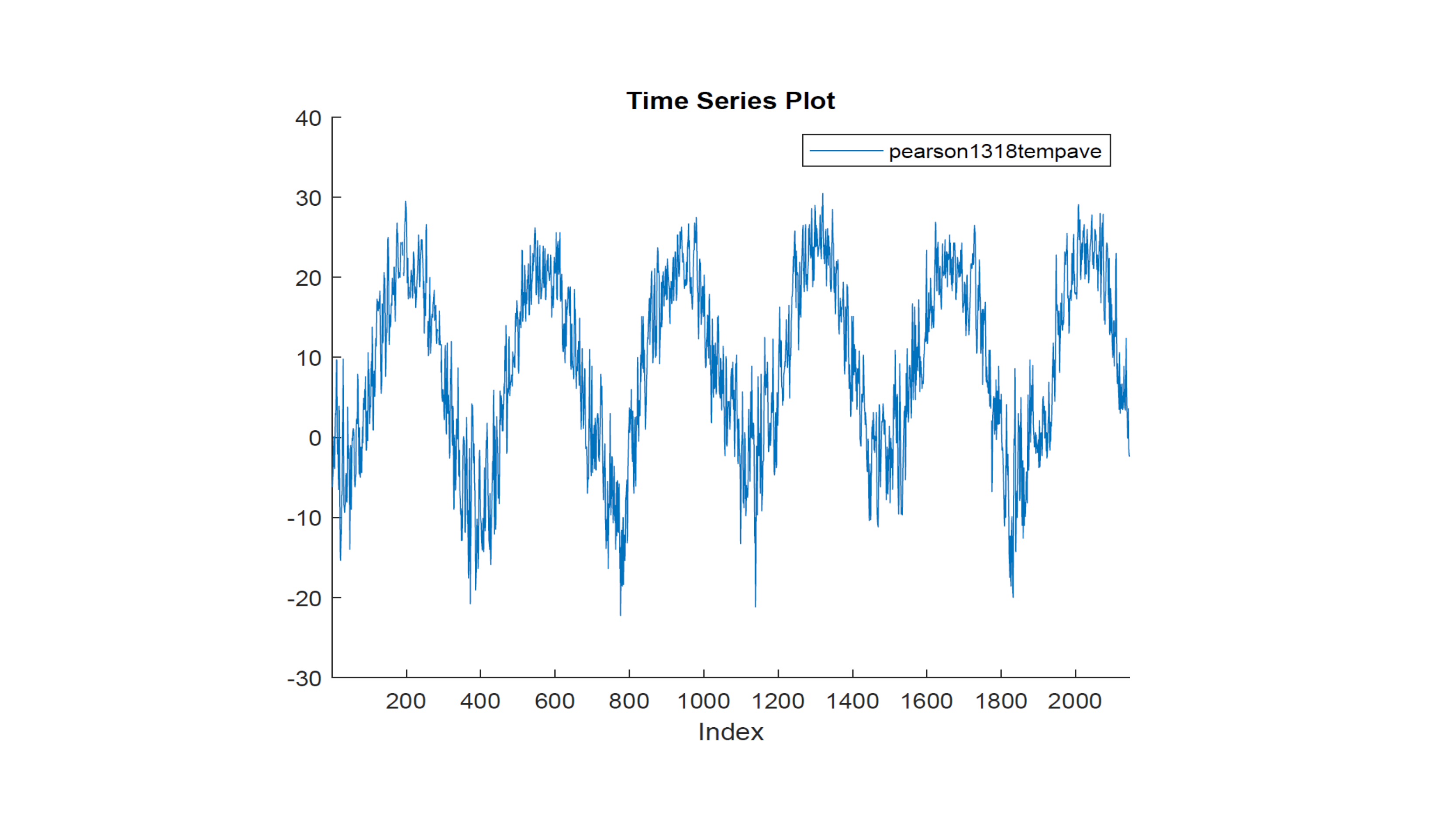}
\caption{Historic daily average temperature of Toronto from 1/1/2013 to 15/11/2018}
\label{fig:TOavgtemp}
\end{figure}

A preliminary statistical analysis of the temperature data shows the descriptive statistics as in Table \ref{tab:stats}. As can be seen, the skewness of the data is negative indicating a longer tail to the left. The kurtosis is less than 3 indicating more frequent but modest movements of temperature than would be expected under assumptions of normal distribution.

\begin{table}[htp!]
	\centering
	\caption{Summary of the series}
	\label{tab:stats}
	\begin{tabular}{ c | c | c | c | c | c}
 	\hline
 	Mean & \ Minimum & \ Maximum  & \ Std Dev  & \ Skewness   & \ Kurtosis \\
 	 		\hline
 	9.0483 & \ -22.30 & \ 30.45 & \ 11.0593 & \ -0.3021 & \ 2.1481 \\
 	 	
 	\hline
	\end{tabular}
\end{table}

It can be observed from both the histogram and the kernel density estimate in Figure \ref{fig:histksd} that the temperature data is bimodal. The left peak is centered around the mean temperature in winter and right peak is centered around the mean temperature in summer.\\

\begin{figure}[htb!]
\centering
\includegraphics[width=\textwidth]{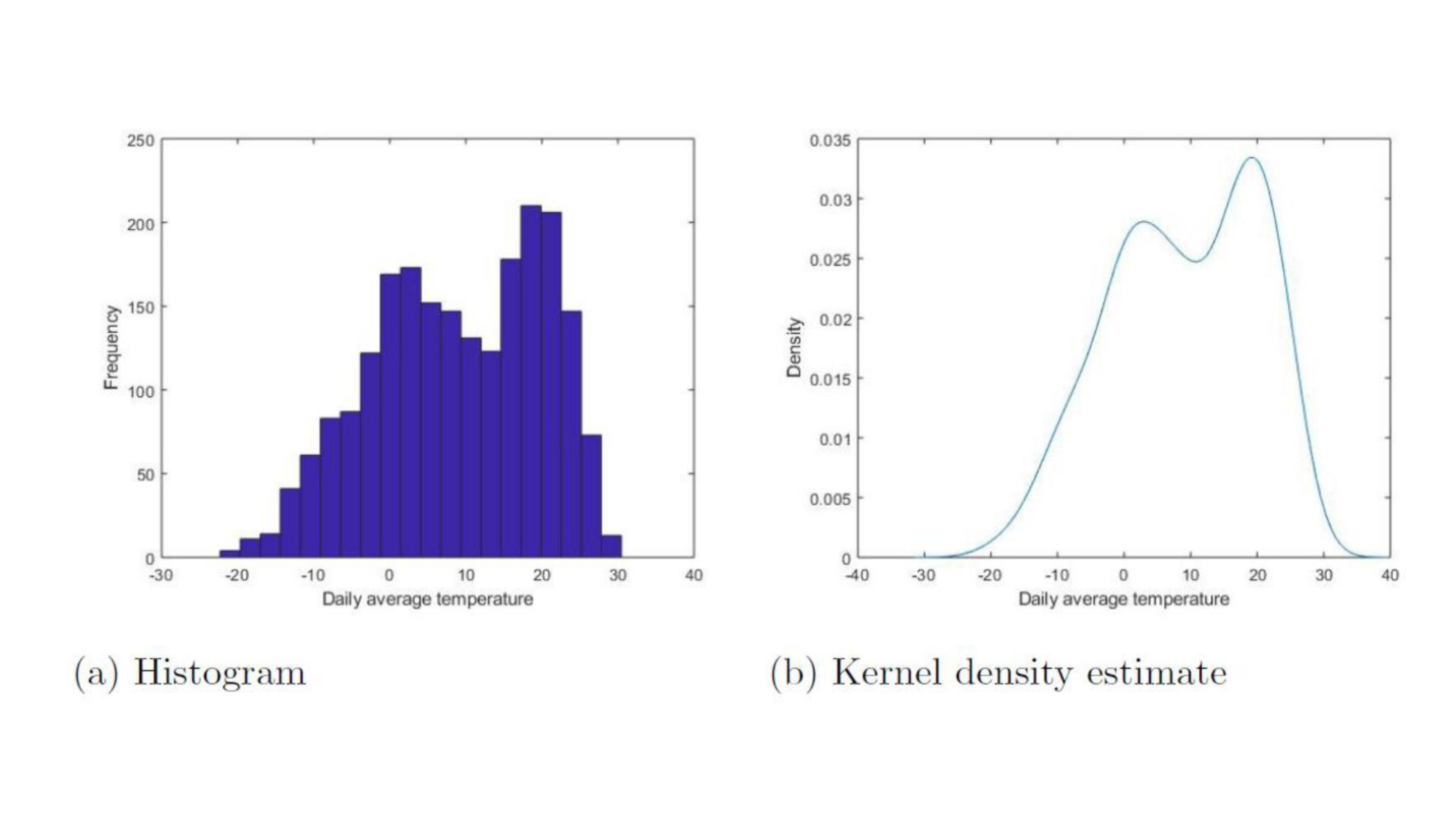}
\caption{Histogram and Kernel density estimate of daily average
temperatures in Toronto from 1/1/2013 to 15/11/2018}
\label{fig:histksd}
\end{figure}

\begin{table}
	\centering
	\caption{Kolmogorov-Smirnov test results}
	\label{tab:kstest}
	\begin{tabular}{ c  | c | c }
 	\hline
 	  p-value &  KSSTAT  & Critical value \\
 	\hline 	
 0 &  0.6889 &  0.0292 \\

 \hline

	\end{tabular}
\end{table}

Table \ref{tab:kstest} shows the results of a Kolmogorov–Smirnov test. This is a goodness-of-fit test to verify whether the data is from a normal distribution. It can be concluded from the p-value of zero and a KSSTAT value significantly greater than the critical value, that the temperature data do not seem to follow a normal distribution.

The seasonal component as  described in equation (\ref{eq:seass2}) is adjusted via a regression model. The results are shown in table \ref{tab:seass}.

\begin{table}
  \centering
 \begin{tabular}{|c|c|c|c|c|c|}
   \hline
   & Estimate&       SE &       t-Stat  & Conf. int.  &   pValue \\ \hline
    $ b_0 $&   7.9733 &     0.20221 &    39.431 & (7.573, 8.359) &  6.8857e-256 \\ \hline
    $b_1$  &       0.0008223   & 0.059639  &   5.0812 & (0.0005043, 0.00114) &   4.076e-07 \\ \hline
    $b_2$  &     -5.8796  &    0.14176    & -41.476  &  (-6.143, -5.590)& 3.103e-276 \\ \hline
    $b_3$ &      -12.866  &    0.14287  &  -90.052  &   (-13.13, -12.57)   &      0    \\    \hline
 \end{tabular}
  \caption{}\label{tab:seass}
\end{table}


\begin{figure}[htb!]
\centering
\includegraphics[width=\textwidth]{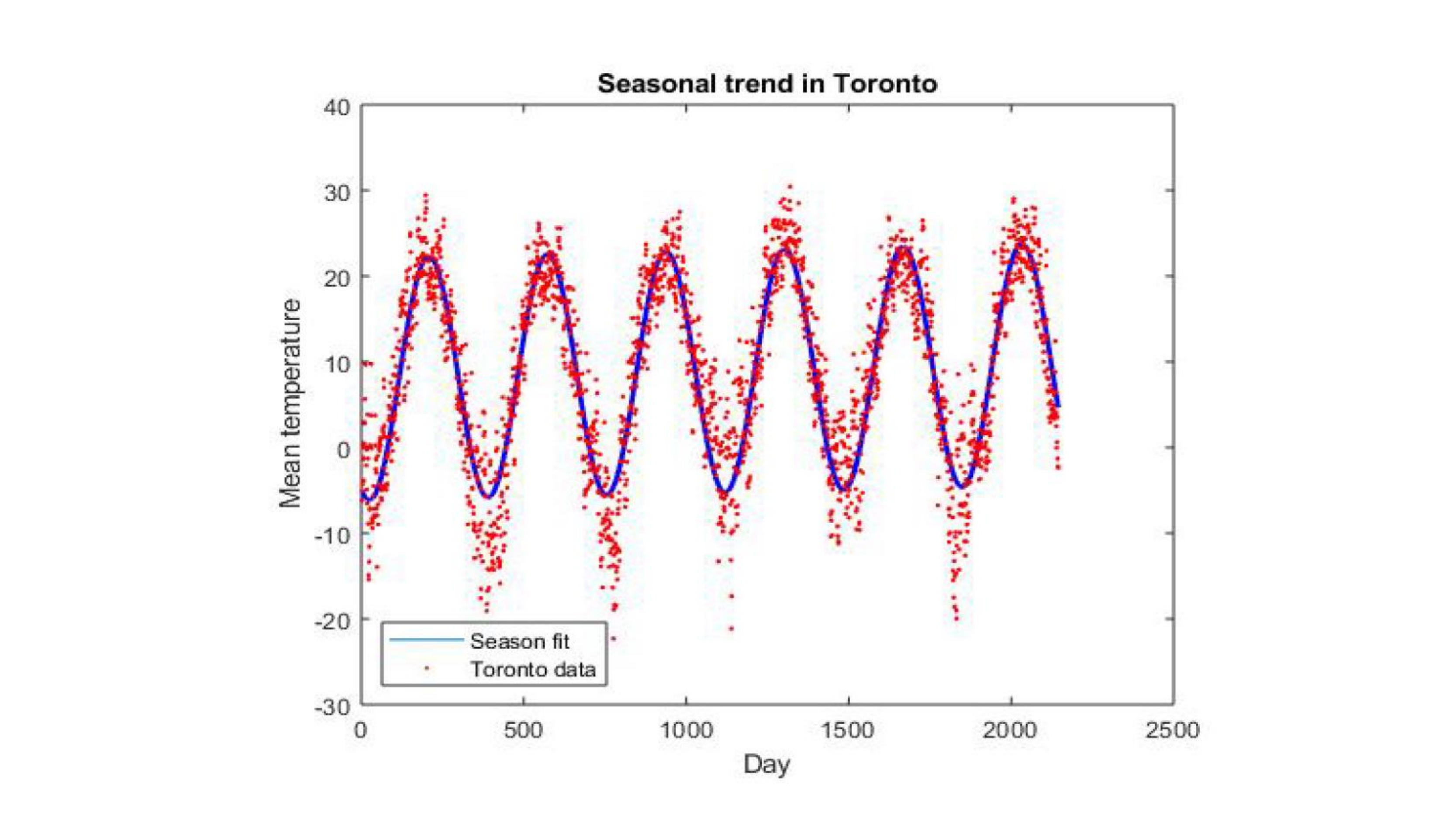}
\caption{Seasonal trend for Toronto daily mean temperature}
\label{fig:SeasFit}
\end{figure}

As it can be seen from Table \ref{tab:seass}, the slope term  $b_1$ in the regression fit is small but significantly different for zero,  which indicates the existence of a  linear trend in temperature rising, consistent with other climatic studies signaling the past decade as the warmest one since temperature is recorded. It must be noted in those cases, a larger set of temperature data for 40 years or more was used.
\subsection{Parameter estimation}
We base our analysis on the log-return series given by:
 \begin{equation}\label{eq:logretpri}
   X_{j \Delta}=\log \left( \frac{T_{(j+1) \Delta}}{T_{j \Delta}} \right)=Y_{(j+1) \Delta}-Y_{j \Delta}, \; j=1,2,\ldots,n
 \end{equation}
 where $\Delta>0$ is the frequency at which the data is registered, typically daily observations. Notice that the observations are independent but not equally distributed.\\
 We estimate the parameters in the model using a likelihood approach combined with the method of moments to set the initial estimate value. In addition, a method of minimum distance based on the characteristic function is considered.
 In Figure \ref{fig:simtraj} bottom, a simulated temperature graph for 2018 is shown, compared with the actual observations( top figure).
 \begin{figure}[htb!]
\centering
\includegraphics[width=\textwidth]{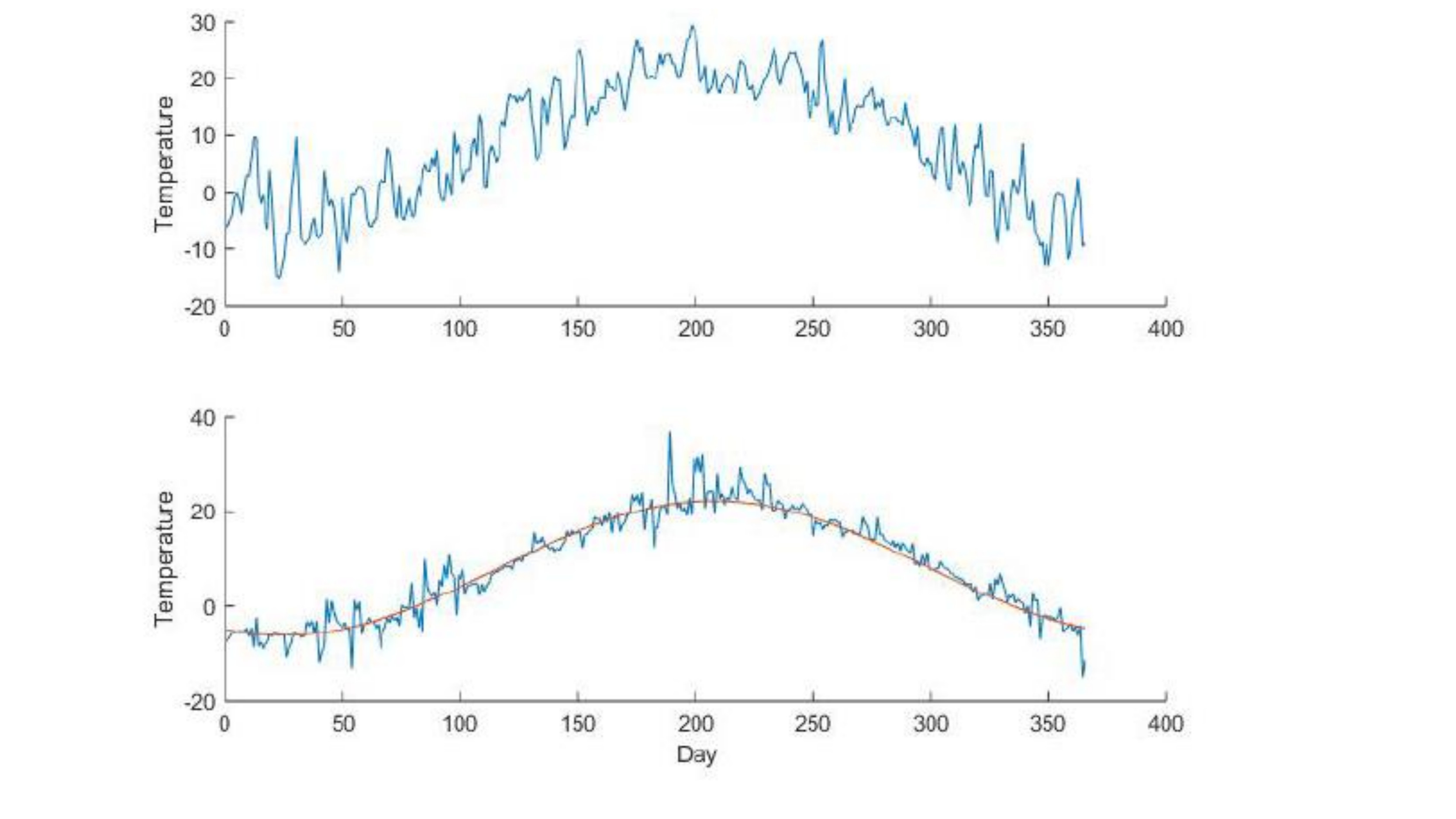}
\caption{Simulated temperature graph for 2018 is shown, compared with the actual observations}
\label{fig:simtraj}
\end{figure}
Figure \ref{fig:pricevsalpha} shows simulated trajectories for different values of the mean-reverting level (left) and how the price of the weather contract changes for different values of the same parameter.
\begin{figure}[htb!]
\centering
\includegraphics[width=\textwidth]{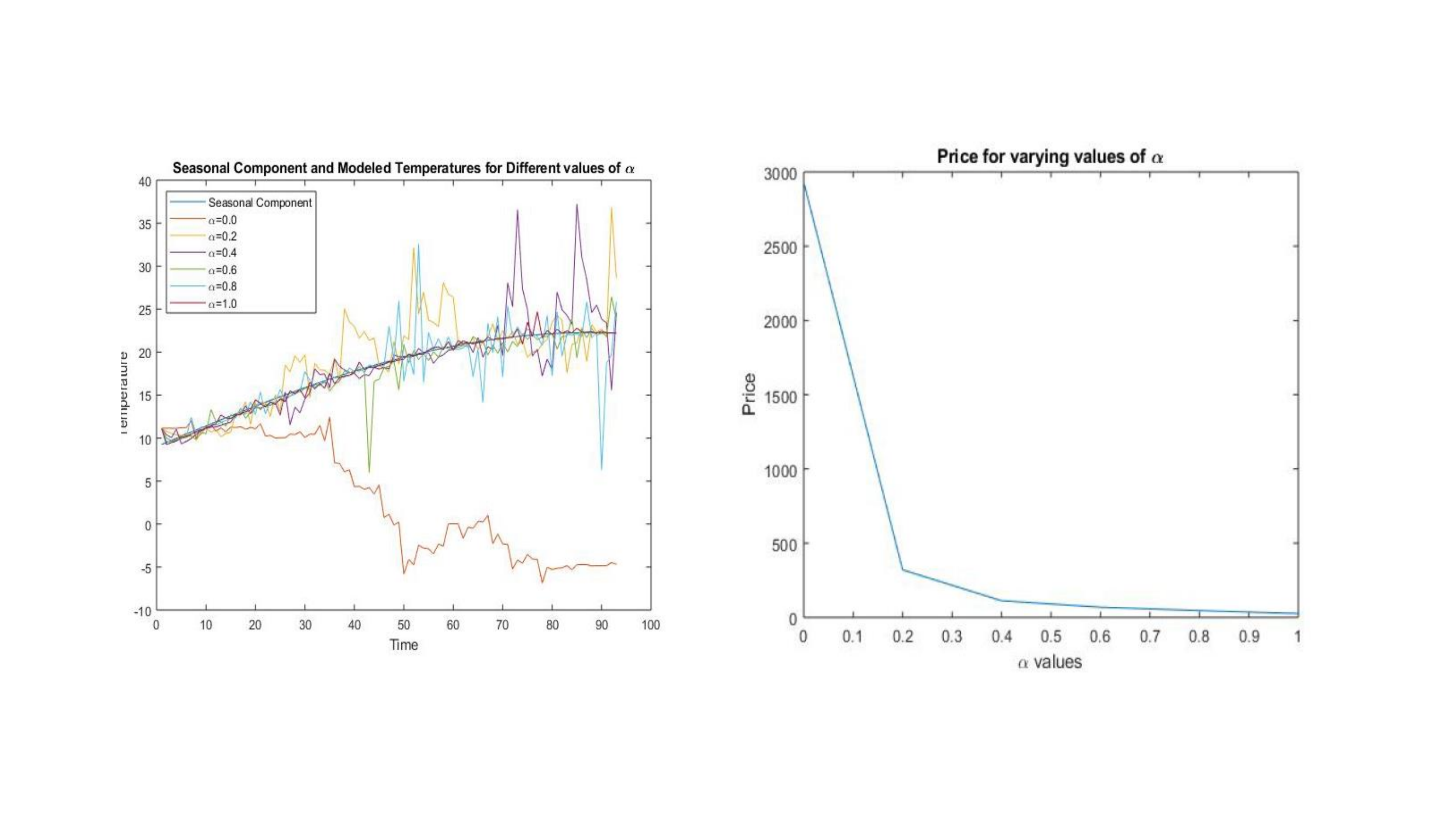}
\caption{Simulated trajectories for different values of the mean-reverting level}
\label{fig:pricevsalpha}
\end{figure}
\section{Acknowledgments}
The author would like to thank  the Natural Sciences and Engineering Research Council of Canada for its support.
\section{Conclusions}
A mean-reverting time-changed Levy process with periodic mean-reverting level and volatility offers a fair model for temperatures at Pearson International Airport temperatures. \\
On the other hand, pricing methods based on Fourier expansions provide an alternative algorithm under the models and the underlying series considered. Weather temperature prices are efficiently computed on a PC in reasonable time.

\end{document}